\newcommand{\ket}[1]{\left|#1\right\rangle}
\DeclarePairedDelimiterX{\norm}[1]{\lVert}{\rVert}{#1}
\newtheorem{theorem}{Theorem}[section]
\newtheorem{lemma}[theorem]{Lemma}
\newtheorem{corollary}[theorem]{Corollary}
\newtheorem{proposition}[theorem]{Proposition}
\theoremstyle{definition}
\newtheorem{definition}[theorem]{Definition}
\numberwithin{equation}{section}
\begin{document}

\title[Optimal approximation to unitary quantum operators with linear optics]{Optimal approximation to unitary quantum operators with linear optics}

\author{Juan Carlos Garcia-Escartin}

\address{Departamento de Teor\'ia de la Se\~{n}al y Comunicaciones e Ingenier\'ia Telem\'atica. ETSI de Telecomunicaci\'on. Universidad de Valladolid. Campus Miguel Delibes. Paseo Bel\'en 15. 47011 Valladolid. Spain.}\email{juagar@tel.uva.es}

\author{Vicent Gimeno}

\address{Universitat Jaume I, Campus de Riu Sec, Departament de Matem\`atiques \& Institut Universitari de Matem\`atiques i Aplicacions de Castell\'o--IMAC, 12071
Caste\-ll\'on de la Plana, Spain} \email{gimenov@uji.es}

\author{Julio Jos\'e Moyano-Fern\'andez}

\address{Universitat Jaume I, Campus de Riu Sec, Departamento de Matem\'aticas \& Institut Universitari de Matem\`atiques i Aplicacions de Castell\'o--IMAC, 12071
Caste\-ll\'on de la Plana, Spain} \email{moyano@uji.es}

\thanks{The first author has been funded by Junta de Castilla y Le\'on (project VA296P18). The second author has been partially supported by the Research Program of the University Jaume I--Project UJI-B2018-3, as well as by the Spanish Government Ministerio de Econom\'ia y Competitividad (MINECO-FEDER) grant MTM2017-84851-C2-2. The third author was partially supported by the Spanish Government, Ministerios de Ciencia e Innovaci\'on y de Universidades, grant PGC2018-096446-B-C22, as well as by Universitat Jaume I, grant UJI-B2018-10}

\begin{abstract}
Linear optical systems acting on photon number states produce many interesting evolutions, but cannot give all the allowed quantum operations on the input state. Using Toponogov's theorem from differential geometry, we propose an iterative method that, for any arbitrary quantum operator $U$ acting on $n$ photons in $m$ modes, returns an operator $\widetilde{U}$ which can be implemented with linear optics. The approximation method is locally optimal and converges. The resulting operator $\widetilde{U}$ can be translated into an experimental optical setup using previous results.
\end{abstract}

\maketitle

\section{Introduction}
Linear optical devices under quantum light show a rich behaviour and have different applications in experiments on the foundations of quantum optics and quantum information \cite{Lou00,GR04,Leo10}. While they can be built with relatively simple optical elements like beam splitters and phase shifters \cite{RZB94,BA14,Saw16,CHM16,GMS18}, their behaviour for photon number states cannot be accurately reproduced by any classical system. One clear example is the boson sampling problem, for which quantum systems can give efficient solutions which cannot be produced by any classical method \cite{AA11}.

Classically, the evolution of the electrical field in $m$ orthogonal modes going through a linear optical system is perfectly described by a unitary $m\times m$ matrix, $S$, called the \emph{scattering matrix} of the system \cite{Poz04}. The evolution of $n$ photons distributed through these $m$ possible modes is given by an $M\times M$ unitary \emph{evolution matrix} $U$ acting on the $M={m+n-1 \choose n}$ states of the resulting Hilbert space. 

The \emph{photonic homomorphism} $\varphi: U(m) \to U(M)$ gives the evolution matrix $U$ which corresponds to a scattering matrix $S$ describing the linear optical system. $U=\varphi(S)$ can be computed from different equivalent methods \cite{Cai53,Sch04,SGL04}.

Any unitary matrix can be written as an exponential $U=e^{iH}$ for a Hermitian matrix $H$. In linear optical devices, we will call this matrix the \emph{effective Hamiltonian} $H_U$ of the linear system. Similarly, $S=e^{iH_S}$. 

The image of the photonic homomorphism, $\mathrm{im}(\varphi)$, is a subgroup of $U(M)$ which contains all the quantum evolutions that are allowed for $n$ photons a linear optical system with $m$ modes. The image subgroup is a representation of $U(m)$ in $U(M)$ and maps each possible classical scattering matrix $S$ describing a linear system into the quantum evolution $U=\varphi(S)$ it induces for $n$ photons.

The evolution in the corresponding unitary algebras from $iH_S$ to $iH_U$ is given by the differential of $\varphi$, $d \varphi:\mathfrak{u}(m)\to \mathfrak{u}(M)$, for which there are also explicit expressions \cite{LN04,FX97,AC05,BL81,ALN06,GGM18}. 

From the point of view of system design, a natural question is whether any given quantum evolution $U \in U(M)$ can be realized using only linear optics. From a simple dimensional argument, it is clear that, except when $m=1$ or $n=1$, there must be some impossible operations \cite{MG17}.

In a previous work, we have given an explicit inverse method to find the $S$ corresponding to any $U\in \mathrm{im}(\varphi)$ which can be implemented using linear optics \cite{GGM19}. 

Here, we address the problem of approximating $U\not\in \mathrm{im}(\varphi)$. We give a method to find the linear optics system with an evolution matrix $\widetilde{U}\in \mathrm{im}(\varphi)$ which minimizes the distance to $U$ locally. The result is based on Toponogov's comparison theorem \cite{CE08} from differential geometry.

Section \ref{imalg} describes the structure of the image algebra and its complement and states two theorems that will become useful later. Section \ref{notation} introduces the basic concepts from differential geometry used in the proof and the notation for the rest of the paper. Section \ref{section:bi} defines the bi-invariant Riemannian metric in which the results are given. Section \ref{section:topo} shows how to apply Toponogov's theorem to reduce the problem of approximating a unitary to finding a geodesic in the correct manifold. Section \ref{tricks} discusses some tricks related to the generation of random unitaries which are needed to explore the image group and to be able to compute a valid matrix logarithm for any desired $U$. Section \ref{algo} describes the iterative method that produces the desired approximation and gives some examples. Finally, Section \ref{summary}, gives a general overview of the method and comments on some practical problems and possible improvements for the approximation algorithm.

\section{The image algebra and its orthogonal complement}
\label{imalg}
If we study the induced map $d \varphi:\mathfrak{u}(m)\to \mathfrak{u}(M)$, we can decompose the Lie algebra $\mathfrak{u}(M)$ orthogonally so that 
\begin{equation}
\mathfrak{u}(M)={\rm im}\, d\varphi\oplus ({\rm im}\, d\varphi)^\perp,
\end{equation}
where $({\rm im}\, d\varphi)^\perp$ is the orthogonal complement of ${\rm im}\, d\varphi$ with respect to the metric
\begin{equation}
\label{metric}
\langle u,v\rangle=\frac{1}{2}{\rm tr}(u^\dag v+v^\dag u).
\end{equation}

For this metric, we can prove a couple of useful facts.

\begin{theorem}
For $U\in U(M)$ such that $U\not\in {\rm im}\,\varphi$, let $v\in \mathfrak{u}(M)$ be the principal logarithm of $U$. Let
\begin{equation}
v=v_T+v_N
\end{equation}
be the orthogonal decomposition of $v$, with a tangent component $v_T \in {\rm im}\, d\varphi$ and a normal component $v_N \in {\rm im}\, (d\varphi)^\perp$. Then,
\begin{enumerate}
\item $U_a=\exp(v_T)\in {\rm im}\, \varphi$.
\item $\Vert U -U_a\Vert\leq \Vert v_N\Vert$. 
\end{enumerate}
\end{theorem}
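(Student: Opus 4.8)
The plan is to prove the two claims in turn, with claim (1) being essentially a definitional matter and claim (2) the real content.

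For claim (1): Since $v_T \in \operatorname{im} d\varphi$, there exists $w \in \mathfrak{u}(m)$ with $d\varphi(w) = v_T$. Because $\varphi: U(m) \to U(M)$ is a Lie group homomorphism, it intertwines the exponential maps, i.e. $\varphi(\exp w) = \exp(d\varphi(w)) = \exp(v_T) = U_a$. Hence $U_a \in \operatorname{im}\varphi$. This uses only the naturality of $\exp$ with respect to Lie group homomorphisms; no metric considerations enter.

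For claim (2): I would estimate $\Vert U - U_a\Vert = \Vert \exp(v_T + v_N) - \exp(v_T)\Vert$. The natural tool is the integral formula for the difference of two matrix exponentials (a Duhamel-type identity): writing $A = v_T$ and $B = v_T + v_N$,
\[
\exp(B) - \exp(A) = \int_0^1 \exp\bigl((1-s)A\bigr)\,(B-A)\,\exp(sB)\,ds = \int_0^1 \exp\bigl((1-s)v_T\bigr)\,v_N\,\exp\bigl(s(v_T+v_N)\bigr)\,ds.
\]
Since $v_T$ and $v_T + v_N$ are skew-Hermitian, the matrices $\exp((1-s)v_T)$ and $\exp(s(v_T+v_N))$ are unitary, and the norm $\Vert\cdot\Vert$ induced by the metric \eqref{metric} (the Frobenius / Hilbert--Schmidt norm) is unitarily invariant. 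Therefore the integrand has norm exactly $\Vert v_N\Vert$ for every $s$, and the triangle inequality for the integral gives $\Vert U - U_a\Vert \le \int_0^1 \Vert v_N\Vert\,ds = \Vert v_N\Vert$.

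The main obstacle — really the only place one must be careful — is making sure the norm appearing in the statement is genuinely unitarily invariant and that the Duhamel identity is applied correctly; both are standard, so I expect this proof to be short. One should also note explicitly that $v = v_T + v_N$ being the \emph{principal} logarithm guarantees $U = \exp(v) = \exp(v_T + v_N)$, which is what lets us write the difference in the form above; the decomposition into tangent and normal parts is orthogonal with respect to \eqref{metric}, but orthogonality is not actually needed for the inequality — only the linearity $\exp(v) = \exp(v_T + v_N)$ and unitary invariance of the norm. (Orthogonality is what makes this bound meaningful and, later, locally optimal via Toponogov, but that is the subject of the subsequent sections.)
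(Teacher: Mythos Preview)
Your proof is correct and considerably more direct than the paper's. Where you bound $\|U-U_a\|$ in one stroke via the Duhamel integral identity together with the unitary bi-invariance of the Frobenius norm, the paper takes a Riemannian-geometric route: it equips $U(M)$ with the bi-invariant metric, forms a geodesic triangle with vertices $Id$, $U$, $U_a$, applies Toponogov's comparison theorem against the flat model $\mathbb{R}^2$ to obtain $d_{U(M)}(U,U_a)\le\|v_N\|$ for the \emph{intrinsic} distance, and only then descends to $\|U-U_a\|\le d_{U(M)}(U,U_a)$ via the embedding into $\mathcal{M}_n(\mathbb{C})$. Your argument needs no curvature input and, as you observe, does not even use the orthogonality of $v_T$ and $v_N$. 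What the paper's detour buys is the stronger intermediate bound on the Riemannian distance and, more importantly, the geodesic/distance framework itself --- in particular the equality $d_{U(M)}(U_i,U)=\|v^i\|$ when $v^i$ is a principal logarithm --- which is precisely what drives the monotonicity $d(U_{i+1},U)\le d(U_i,U)$ and the convergence of the iterative scheme in Section~7. So your shortcut settles the stated theorem cleanly, but the paper's machinery is being built for reuse downstream.
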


Therefore, for any normalized $\vert \psi\rangle$ with $\langle \psi\vert \psi\rangle=1$, we have
\begin{equation}
1\geq\vert \langle U \psi\vert U_a\psi\rangle\vert \geq 1-\frac{\Vert v_N\Vert^2}{2}.
\end{equation}

The proof is given by introducting a bi-invariant metric and reducing the issue to a problem in plane geometry thanks to Toponogov's comparison theorem \cite{CE08}. Later, with this theorem, we can give a recursive method to find a locally optimal approximation and show it converges.

\section{Prerrequisites and Notation}
\label{notation}
For the very basic notions in differential geometry, such as manifold, curve, tangent space, etc., the reader is referred to the books of Do Carmo \cite{DoCar92} or Sakai \cite{Sak96}.
\medskip

A Riemannian metric on a differentiable manifold $M$ is a correspondence which associates to each point $p$ on $M$ an inner product $\langle ~, ~\rangle_p$ on the tangent space $T_pM$ which varies differentiably in the sense that, for any pair of vector fields $X$ and $Y$ which are differentiable in a neighborhood $V$ of $M$, the function $\langle X,Y\rangle$ is differentiable on $V$. The metric with which a Riemannian manifold $M$ is endowed may come from a distance. Given two points $p,q \in M$, the distance $d(p,q)$ between them is defined to be the infimum of the lengths of all curves joining $p$ and $q$ which are piecewise differentiable. 
\medskip

Two fundamental concepts of Riemannian geometry are those of geodesic and curvature. Roughly speaking, a geodesic is a curve minimizing the distance between two nearby points. More precisely,  let $I$ be a closed interval in $\mathbb{R}$;
a parametrized curve $\gamma: I \to M$ is called a geodesic at $t_0$ if the covariant derivative $\frac{D}{dt}\left ( \frac{d\gamma}{dt}\right )$ vanishes at the point $t_0$ (see i.e. \cite{DoCar92}, Definition 2.1); if $\gamma$ is a geodesic at $t$ for all $t \in I$, then $\gamma$ is called a geodesic. If $[a,b] \subseteq I$ and $\gamma:I \to M$ is a geodesic, the restriction of $\gamma$ to $[a,b]$ is called a \emph{geodesic segment joining} $\gamma(a)$ to $\gamma (b)$. By abuse of language it is often referred to the image $\gamma (I)$ of a geodesic $\gamma$ as a geodesic.

A \emph{minimal geodesic} between $p$ and $q$ is the shortest one joining $p$ and $q$. It is easily seen that \emph{if} there exists a minimal geodesic $\gamma$ joining $p$ to $q$, then $d(p,q)$ equals the length $\ell(\gamma)$ of $\gamma$. This conditional \emph{if} holds under the hypothesis of completeness: a Riemannian manifold $M$ is said to be (geodesically) complete if for every $p \in M$ the exponential map $\mathrm{exp}_p$ is defined for the whole tangent space $T_p M$, i.e., if any geodesic $\gamma (t)$ starting from $p$ is defined for all $t \in \mathbb{R}$, and the statement is:

\begin{theorem}[Hopf-Rinow]\label{thm:hr}
Let $M$ be a Riemannian manifold and let $p \in M$. Then $M$ is geodesically complete if and only if it is complete as a metric space. Moreover, this implies that for any $q \in M$ there exists a minimizing geodesic $\gamma$ joining $p$ to $q$.
\end{theorem}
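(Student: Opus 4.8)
The plan is to prove the equivalence together with the existence statement by the classical route, whose logical core is a single claim: \emph{if $\exp_p$ is defined on all of $T_pM$, then every $q\in M$ is joined to $p$ by a minimizing geodesic.} Everything else follows from this claim by short compactness arguments, so I would organize the proof around establishing it first and then deducing the rest.

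To prove the core claim, set $r=d(p,q)$ and fix $\delta\in(0,r)$ small enough that $\overline{B_\delta(p)}$ is a normal ball; then the geodesic sphere $\Sigma_\delta$ is compact and every curve from $p$ that leaves $\overline{B_\delta(p)}$ meets $\Sigma_\delta$. Pick $x_0=\exp_p(\delta u_0)$ with $|u_0|=1$ minimizing $d(\cdot,q)$ over $\Sigma_\delta$ (possible by compactness), and let $\gamma(t)=\exp_p(tu_0)$, which is defined for all $t$ by hypothesis. The crux is to show that the closed set $A=\{t\in[\delta,r]:\ d(\gamma(t),q)=r-t\}$ is all of $[\delta,r]$: membership of $\delta$ comes from the triangle inequality applied to near-minimizing paths from $p$ to $q$ crossing $\Sigma_\delta$; closedness from continuity of $d$ and $\gamma$; and the decisive point is that $A$ is closed under small rightward moves, which one obtains by rerunning the geodesic-sphere construction centred at $\gamma(t_0)$ and checking that the resulting broken path from $p$ has length exactly equal to its distance to $q$, hence has no corner and therefore coincides with $\gamma$. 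Then $r\in A$ gives $d(\gamma(r),q)=0$, so $\gamma|_{[0,r]}$ is the desired minimizing geodesic.

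With the core claim in hand, the existence statement is immediate, since geodesic completeness supplies its hypothesis at every $p$. For \emph{geodesically complete $\Rightarrow$ metrically complete}: a Cauchy sequence is bounded, hence of the form $\exp_p(v_k)$ with the $|v_k|$ bounded by the core claim; passing to a convergent subsequence of the $v_k$ and using continuity of $\exp_p$ produces a convergent subsequence of the original sequence, and a Cauchy sequence with a convergent subsequence converges. For the converse, \emph{metrically complete $\Rightarrow$ geodesically complete}: if a unit-speed geodesic $\gamma$ were maximally defined on $[0,a)$ with $a<\infty$, then $\gamma(t_k)$ is Cauchy for $t_k\uparrow a$ because $d(\gamma(t_k),\gamma(t_\ell))\le|t_k-t_\ell|$, so it converges to some $q\in M$; a uniformly normal neighbourhood of $q$ (the standard compactness fact that $\exp_x$ is a diffeomorphism on a ball of fixed radius $\varepsilon$ for all $x$ near $q$) then lets one continue $\gamma$ past $a$, contradicting maximality.

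I expect the genuine obstacle to be the ``open to the right'' step of the core claim: the bookkeeping with the triangle inequality that forces the broken geodesic to be smooth is the only nontrivial estimate, and the argument hinges on it. Beyond that, Hopf--Rinow is a textbook result, so in the paper I would either cite \cite{CE08} outright or present only the core claim in detail, since that is precisely the ingredient used afterwards: a minimizing geodesic in the relevant manifold to which Toponogov's comparison theorem can then be applied.
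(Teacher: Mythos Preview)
Your outline is the standard Hopf--Rinow argument and is correct as written; the ``open to the right'' step you flag is indeed the only genuine estimate, and your description of it (rerun the sphere construction at $\gamma(t_0)$, then use equality in the triangle inequality to rule out a corner via the first-variation formula) is the right one.

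The comparison with the paper is trivial in this case: the paper does not prove Theorem~\ref{thm:hr} at all. Hopf--Rinow is stated there as background from the textbooks of do~Carmo and Sakai already cited in Section~\ref{notation}, and is invoked only to guarantee the existence of minimizing geodesics (for the triangles in Section~\ref{section:topo}) and to conclude that the Cauchy sequence $\{U_i\}$ converges in Section~\ref{algo}. Your closing remark anticipated exactly this: the paper simply cites the result, so your full proof goes well beyond what the authors include.
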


An important consequence of Theorem \ref{thm:hr} is the following, see \cite{Sak96}, Corollary 1.4 and Problem 1 of Chapter III:

\begin{corollary}
A $C^{\infty}$ manifold $M$ is compact if and only if any Riemannian metric on $M$ is complete.
\end{corollary}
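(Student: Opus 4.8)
The plan is to deduce the corollary directly from the Hopf--Rinow theorem (Theorem~\ref{thm:hr}), handling the two implications separately, and to phrase the harder direction contrapositively.

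For the forward implication, assume $M$ is compact and let $g$ be an arbitrary Riemannian metric on $M$. The Riemannian distance $d_g$ makes $M$ into a metric space whose metric topology coincides with the manifold topology (a standard fact, using only that in a coordinate chart $d_g$ is comparable to the Euclidean distance). Hence $(M,d_g)$ is a compact metric space, and a compact metric space is complete. By Theorem~\ref{thm:hr}, completeness as a metric space is equivalent to geodesic completeness, so $g$ is complete; since $g$ was arbitrary, every Riemannian metric on $M$ is complete.

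For the converse I would argue by contraposition: assuming $M$ is not compact, I construct a Riemannian metric on $M$ that is not complete. Start from some Riemannian metric $g_0$ on $M$, which exists by a partition-of-unity argument. If $g_0$ is already incomplete we are done, so assume it is complete. Since $M$ is non-compact and $(M,g_0)$ is complete, it is unbounded, and a standard limiting argument (take minimizing geodesics --- which exist by Theorem~\ref{thm:hr} --- from a fixed point $p$ to a sequence of points escaping every compact set, and pass to a convergent subsequence of their unit initial velocities in $T_pM$) produces a geodesic ray $\gamma\colon[0,\infty)\to M$, parametrized by $g_0$-arc length, that is minimizing on every segment. Such a ray is automatically injective and proper as a map, because $d_{g_0}(\gamma(0),\gamma(t))=t$; in particular $\gamma([0,\infty))$ is a closed embedded submanifold.

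Now conformally rescale. Using a tubular neighbourhood of $\gamma([0,\infty))$ together with bump functions, choose a smooth $\lambda\colon M\to(0,\infty)$ with $\lambda(\gamma(t))\le (1+t)^{-4}$ for all $t$, and put $g\defa \lambda\,g_0$. Then $\gamma$ has finite $g$-length, $\ell_g(\gamma)=\int_0^\infty \sqrt{\lambda(\gamma(t))}\,dt\le \int_0^\infty (1+t)^{-2}\,dt<\infty$, yet $\gamma$ is still a divergent curve since the underlying topology is unchanged. A Riemannian manifold that admits a divergent curve of finite length is not complete, so $(M,g)$ is incomplete, which finishes the contrapositive. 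The routine ingredients --- existence of $g_0$, the chart comparability of $d_g$, and the divergent-curve characterization of completeness --- I would only cite; the one step needing genuine care is extracting an \emph{injective, proper} geodesic ray (rather than an arbitrary divergent curve) so that prescribing $\lambda$ along it is unobstructed, and this is the main, though mild, obstacle.
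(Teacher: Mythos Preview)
The paper does not actually prove this corollary: it merely states it as a consequence of Hopf--Rinow and refers the reader to Sakai \cite{Sak96}, Corollary~1.4 and Problem~1 of Chapter~III. So there is no proof in the paper to compare against; you have supplied an argument where the paper defers entirely to the literature.

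That said, your argument is correct and is essentially the standard one (and matches what Sakai sketches). The forward direction is immediate once one accepts that the Riemannian distance induces the manifold topology, so compactness gives metric completeness and hence, by Theorem~\ref{thm:hr}, geodesic completeness. For the converse, extracting a minimizing ray in a complete non-compact manifold via a limit of minimizing segments, and then conformally damping the metric along that ray so it acquires finite length while remaining divergent, is exactly the textbook construction. Your remark that the ray must be injective and proper so that the conformal factor $\lambda$ can be smoothly prescribed along it is the right point of care, and your handling of it is adequate.
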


On the other hand, the concept of curvature we will refer to is that of sectional curvature. According to Milnor \cite{Mil76} (p.~295), the sectional curvature of the tangential 2-plane spanned by some orthogonal unit vectors $u$ and $v$ can be described geometrically as the Gaussian curvature, at the point, of the surface swept out by all geodesics having a linear combination of $u$ and $v$ as tangent vector.
\medskip

We are interested in Riemannian manifolds with additional algebraic structure: Lie groups. A Lie group is a group $G$ with a differentiable structure such that the mapping $G\times G \to G$ given by $(x,y)\to xy^{-1}$, $x,y \in G$, is differentiable. It follows that \emph{translations from the left} $L_x$ resp. translations from the right $R_x$ given by $L_x: G \to G, L_x(y)=xy$ resp. $R_x: G \to G, R_x(y)=yx$ are diffeomorphisms. 

A Riemannian metric on $G$ is said to be \emph{left invariant}~resp.~\emph{right invariant} if for all $p, g \in G$ and for all $u,v \in T_pG$ it holds that 
$$
\langle u,v\rangle_p = \langle d(L_g)(u) , d(L_g)(v) \rangle_{L_g(p)}\ \ ~\mbox{resp.}~\ \ \langle u,v\rangle_p = \langle d(R_g)(u) , d(R_g)(v) \rangle_{R_g(p)}.
$$
A Riemannian metric is called \emph{bi-invariant} if it is both left and right invariant. Any compact Lie group can be endowed with a bi-invariant metric \cite[Exercise 7]{DoCar92}.

We also consider the Lie algebra $\mathcal{G}$ of $G$, which consists of the vectors in $T_e G$ with $e$ the neutral element of $G$ and with a well-known additional structure provided by a commutator (or Lie bracket) in the usual way.
\medskip

We will focus on the Lie group $U(M)$ as a differentiable manifold, for a positive integer $M$; its Lie algebra will be denoted by $\mathfrak{u}(M)$. Identity matrices of any size will be denoted by $Id$.
\medskip

\section{A bi-invariant Riemannian metric}\label{section:bi}
In this section, we endowe $U(M)$ with a Riemannian structure which will be useful later on.

For $u,v \in \mathfrak{u}(M)$, we define an inner product
\begin{equation}\label{eqn:skalar}
\langle u,v \rangle:=\frac{1}{2}\mathrm{tr}(u^{\dag} v + v^{\dag}u).
\end{equation}

This definition does actually correspond to a positive definite symmetric bilinear form: The bilinearity is an easy exercise, and moreover:
\begin{enumerate}
\item Since $u,v \in \mathfrak{u}(M)$, then $u^{\dag} = -u$, $v^{\dag}=-v$, and therefore
$$
\langle u,v \rangle:=\frac{1}{2}\mathrm{tr}(-u v - v u) = -\mathrm{tr}(uv).
$$
\item The symmetry of (\ref{eqn:skalar}) is clear, since $\langle u,v \rangle = -\mathrm{tr}(uv) = -\mathrm{tr}(vu) = \langle v,u \rangle$.
\item The positive definiteness follows from the fact that
$$
\langle u,u \rangle = \mathrm{tr}(u^{\dag}u)= ||u||^2\geq 0, 
$$
where $||u||=\sqrt{\sum_{i,j}|u_{ij}|^2}$ is the Frobenius norm of $u$, which is nonnegative and has all the required norm properties \cite{HJ85}.
\end{enumerate}

The metric defined above is Riemannian and bi-invariant. Bi-invariant metrics are useful for us because of the following.

\begin{theorem}[Milnor \cite{Mil76}]\label{thm:milnor}
Every compact Lie group admits a bi-invariant metric, which has nonnegative sectional curvature.
\end{theorem}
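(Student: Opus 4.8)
The plan is to establish the two assertions in turn: the existence of a bi-invariant metric on a compact Lie group $G$, and then the nonnegativity of its sectional curvature.

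For existence, I would use the standard fact that a left-invariant metric on $G$ is simultaneously right-invariant -- hence bi-invariant -- exactly when the corresponding inner product on the Lie algebra $\mathcal{G}=T_eG$ is invariant under the adjoint representation $\mathrm{Ad}\colon G\to GL(\mathcal{G})$. To produce such an inner product, take \emph{any} inner product $\langle\,,\,\rangle_0$ on $\mathcal{G}$ and average it over the group against a Haar measure $dg$:
\[
\langle X,Y\rangle:=\int_G\langle \mathrm{Ad}(g)X,\,\mathrm{Ad}(g)Y\rangle_0\,dg .
\]
Compactness of $G$ ensures that $dg$ is finite, so the integral converges; positive definiteness is inherited from $\langle\,,\,\rangle_0$ because each $\mathrm{Ad}(g)$ is invertible; and the change of variables $g\mapsto gh$, using $\mathrm{Ad}(gh)=\mathrm{Ad}(g)\mathrm{Ad}(h)$ and the invariance of $dg$, shows $\langle\mathrm{Ad}(h)X,\mathrm{Ad}(h)Y\rangle=\langle X,Y\rangle$ for all $h\in G$. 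Left-translating this inner product over $G$ produces the bi-invariant metric (this is also \cite[Exercise 7]{DoCar92}).

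For the curvature, the key input is the infinitesimal form of $\mathrm{Ad}$-invariance: each operator $\mathrm{ad}_X=[X,\cdot\,]$ is skew-adjoint, $\langle[X,Y],Z\rangle=-\langle Y,[X,Z]\rangle$ for all $X,Y,Z\in\mathcal{G}$, obtained by differentiating $\langle\mathrm{Ad}(\exp tX)Y,\mathrm{Ad}(\exp tX)Z\rangle$ at $t=0$. Feeding this into the Koszul formula for the Levi-Civita connection evaluated on left-invariant vector fields, the derivative terms vanish (they are derivatives of constant functions) and skew-adjointness collapses the three bracket terms to the clean identity $\nabla_XY=\tfrac12[X,Y]$; since left-invariant fields span every tangent space, this determines $\nabla$ completely. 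A short computation with the Jacobi identity then yields, on left-invariant fields, $R(X,Y)Z=\tfrac14[Z,[X,Y]]$, and applying skew-adjointness twice,
\[
\langle R(X,Y)Y,X\rangle=\tfrac14\langle[Y,[X,Y]],X\rangle=\tfrac14\langle[X,Y],[X,Y]\rangle=\tfrac14\,\|[X,Y]\|^2 .
\]
Dividing by $\|X\|^2\|Y\|^2-\langle X,Y\rangle^2$ exhibits the sectional curvature of the plane spanned by $X$ and $Y$ as $\tfrac14\|[X,Y]\|^2$ for an orthonormal pair, which is $\geq 0$, proving the claim.

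The step that I expect to require the most care is verifying that $\nabla_XY=\tfrac12[X,Y]$ really is the Levi-Civita connection. Torsion-freeness, $\nabla_XY-\nabla_YX=[X,Y]$, is immediate, but metric compatibility, $X\langle Y,Z\rangle=\langle\nabla_XY,Z\rangle+\langle Y,\nabla_XZ\rangle$, must be checked: on left-invariant fields the left side is $0$ while the right side equals $\tfrac12\langle[X,Y],Z\rangle+\tfrac12\langle Y,[X,Z]\rangle$, which vanishes precisely by skew-adjointness of $\mathrm{ad}_X$ -- so this is exactly where bi-invariance enters. After that, only the sign bookkeeping in the Jacobi-identity step and in the chosen curvature convention remains.
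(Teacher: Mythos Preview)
Your argument is correct and is the standard proof: Haar-averaging to obtain an $\mathrm{Ad}$-invariant inner product, then the bi-invariant connection formula $\nabla_XY=\tfrac12[X,Y]$ leading to $\kappa(u,v)=\tfrac14\|[u,v]\|^2\ge 0$. Note, however, that the paper does not supply its own proof of this theorem: it is stated with attribution to Milnor \cite{Mil76}, and immediately afterwards the paper simply quotes the sectional-curvature formula $\kappa(u,v)=\tfrac14\langle[u,v],[u,v]\rangle$ from that reference. Your derivation is exactly the one underlying that citation (and the existence part matches the paper's earlier pointer to \cite[Exercise~7]{DoCar92}), so there is nothing to compare---you have filled in what the paper leaves to the literature.
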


In the case of a bi-invariant metric, the sectional curvature admits an easier formula, see \cite[p.~323, Eqn.~(7.3)]{Mil76}:
$$
\kappa(u,v)=\frac{1}{4}\langle [u,v],[u,v] \rangle.
$$

Furthermore, in a Lie group admitting a bi-invariant metric, geodesic curves have an easy description: they coincide with the exponential map. More precisely, for $p \in U(M)$, a geodesic curve $\gamma: I \to U(M)$ such that $\gamma(0)=p$ and $\dot{\gamma}(0)=u$ is of the form
\begin{equation}\label{eqExpExp}
\gamma(t)=\exp{(up^{-1}t)}\cdot p
\end{equation}

In fact, for $p,q\in U(M)$, there exists a geodesic $\gamma$ joining $p$ and $q$ with $\gamma(0)=p$ such that
\begin{align}
\ell(\gamma([0,t]))=&\int_{0}^{t} \sqrt{\langle \dot{\gamma}(t),\dot{\gamma}(t)  \rangle} dt \nonumber\\
=& \int_0^{t} \sqrt{\langle u,u \rangle} dt = \int_0^{t} \norm{u}dt\label{eqlength} \\
=& \norm{u}t. \nonumber
\end{align}

A geodesic segment  $\gamma: [0,1] \to U(M)$ is called \emph{minimal} if it realizes a distance for any $t\in [0,1]$ , \emph{i.e.},
\begin{equation}\label{realizing}
d(\gamma(0),\gamma(t)) = \ell(\gamma([0,t])).
\end{equation}
By equation (\ref{eqExpExp}) a geodesic segment $\gamma: [0,1] \to U(M)$ joining $\gamma(0)=p$ and $\gamma(1)=q$ can be always obtained as
$$
\gamma(t)=\exp(vt)p,\quad \text{  with } v\text { such that } \exp(v)=q p^{-1}. 
$$
The concept of a minimal geodesic is related with the concept of principal logarithm in the following way
\begin{lemma}
\label{principalminimal}
Let $p,q\in U(M)$, let $w$ be a principal logarithm of $qp^{-1}$. Then, the geodesic segment
$$
\gamma:[0,1]\to U(M),\quad t\mapsto \gamma(t)=\exp(wt)p
$$
is a minimal geodesic segment joining $p$ and $q$.
\end{lemma}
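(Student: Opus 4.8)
The plan is to reduce the claim to the identification of $d(p,q)$ with the norm of a principal logarithm of $qp^{-1}$, and then to invoke the standard fact that a subsegment of a minimizing geodesic is itself minimizing.

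\emph{Step 1.} Since $U(M)$ is compact it is complete as a metric space, so Theorem~\ref{thm:hr} provides a minimizing geodesic joining $p$ to $q$, which we parametrize affinely on $[0,1]$. By (\ref{eqExpExp}) any geodesic sending $0\mapsto p$ and $1\mapsto q$ equals $t\mapsto\exp(vt)p$ for some $v\in\mathfrak{u}(M)$ with $\exp(v)=qp^{-1}$, and by (\ref{eqlength}) it has length $\norm{v}$; minimality of the geodesic furnished by Hopf--Rinow then gives
\[
d(p,q)=\min\bigl\{\norm{v}\ \big|\ v\in\mathfrak{u}(M),\ \exp(v)=qp^{-1}\bigr\}.
\]

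\emph{Step 2.} I claim this minimum equals $\norm{w}$. Let $v\in\mathfrak{u}(M)$ with $\exp(v)=qp^{-1}$. Being skew-Hermitian, $v$ is normal; writing its spectral decomposition through orthogonal projections and exponentiating shows that $v$ and $qp^{-1}$ are simultaneously unitarily diagonalizable, so in a common orthonormal eigenbasis $v=\mathrm{diag}(i\mu_1,\dots,i\mu_M)$ and $qp^{-1}=\mathrm{diag}(e^{i\mu_1},\dots,e^{i\mu_M})$ with $\mu_j\in\mathbb{R}$. In that same basis $w=\mathrm{diag}(i\theta_1,\dots,i\theta_M)$ with $\theta_j\in(-\pi,\pi]$ and $e^{i\theta_j}=e^{i\mu_j}$, hence $\mu_j\equiv\theta_j\pmod{2\pi}$. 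Since $|\theta_j|\le\pi$, every real $\mu$ congruent to $\theta_j$ mod $2\pi$ satisfies $|\mu|\ge|\theta_j|$, so $\norm{v}^2=\sum_j\mu_j^2\ge\sum_j\theta_j^2=\norm{w}^2$. Combining with Step~1, $d(p,q)=\norm{w}$.

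\emph{Step 3.} By (\ref{eqExpExp}) the curve $\gamma(t)=\exp(wt)p$ is a geodesic with $\gamma(0)=p$ and $\gamma(1)=\exp(w)p=q$, and by (\ref{eqlength}) its length on $[0,t]$ is $\norm{w}\,t$; in particular $\ell(\gamma([0,1]))=\norm{w}=d(p,q)$, so $\gamma$ is a minimizing geodesic from $p$ to $q$. For any $t_0\in(0,1)$, if $d(\gamma(0),\gamma(t_0))$ were strictly smaller than $\ell(\gamma([0,t_0]))$, the triangle inequality together with $\ell(\gamma([0,t_0]))+\ell(\gamma([t_0,1]))=\ell(\gamma([0,1]))$ would give $d(\gamma(0),\gamma(1))<\ell(\gamma([0,1]))=d(\gamma(0),\gamma(1))$, a contradiction. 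Hence $d(\gamma(0),\gamma(t))=\ell(\gamma([0,t]))$ for all $t\in[0,1]$, which is exactly (\ref{realizing}); thus $\gamma$ is a minimal geodesic segment joining $p$ and $q$.

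\emph{Main obstacle.} The only delicate point is Step~2: arguing the simultaneous diagonalizability of a skew-Hermitian logarithm with the unitary it exponentiates to, which has to be phrased through spectral projections so as to accommodate repeated eigenvalues of $qp^{-1}$, together with the elementary inequality $|\mu|\ge|\theta|$ for $\mu\equiv\theta\pmod{2\pi}$ with $\theta\in(-\pi,\pi]$ --- including the borderline case $\theta=\pi$, where more than one logarithm attains the minimal norm but the value $\norm{w}$ is unambiguous.
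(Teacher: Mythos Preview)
Your proof is correct and follows essentially the same route as the paper: reduce $d(p,q)$ to $\min\{\norm{v}:\exp(v)=qp^{-1}\}$ via Hopf--Rinow and (\ref{eqExpExp})--(\ref{eqlength}), then diagonalize and compare eigenvalue arguments modulo $2\pi$ to show the principal logarithm attains the minimum. You are more careful than the paper on two standard points it leaves implicit --- the simultaneous diagonalizability of an arbitrary skew-Hermitian logarithm with $qp^{-1}$ (and hence with $w$, which acts as a scalar on each eigenspace), and the passage from length-minimization between the endpoints to the subsegment condition (\ref{realizing}) --- but the core argument is the same.
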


Recall $iK$ is called a \emph{principal logarithm} of a unitary matrix $M\in U(M)$ if
$$
K^\dag=K,\quad \exp(iK)=M,\quad \text{ and  the  eigenvalues of $K$ are in }  (-\pi,\pi]. $$ 
There are efficient algorithms that can compute the principal logarithm of a unitary matrix \cite{Lor14}. We choose this definition of principal logartihm over the one for the interval $(-\pi,\pi)$ so that there is always a principal matrix logarithm, even for matrices with real negative eigenvalues (-1). Some properties, like infinite differentiability, are lost with this definition, but they are not used in our result.

\begin{proof}[Proof of Lemma \ref{principalminimal}]
The length of  a geodesic segment $\gamma(t)=\exp(vt)p$   with $\exp(v)=q p^{-1}$ is  (by equation (\ref{eqlength}))
$$
\ell(\gamma([0,1]))=\Vert vp\Vert=\Vert v\Vert.
$$ 
The distance $d(p,q)$ between $p$ and $q$ is the shortest length of the curves joining $p$ and $q$. In the case of a complete metric this shortest length is attained by a geodesic segment joining $p$ and $q$. Hence,  the geodesic segment $\gamma(t)=\exp(vt)p$ is  minimal  if and only if
$$
\Vert v\Vert=\min\{\Vert w\Vert\, :\, \exp(w)= q p^{-1}\}.
$$
But the equation $\exp(w)= q p^{-1}$  has the following family of solutions 
$$
w=U [\log(\lambda_i )\delta_{ij}] U^\dag
$$
with $pq^{-1}=U \Lambda U^\dag$, $U$ being a unitary matrix, $[\Lambda]_{ij}=(\lambda_i )\delta_{ij}$ being the diagonal matrix of eigenvalues of $pq^{-1}$, and $\log(\lambda_i)$ being any  logarithm of $\lambda_i$. Observe that
$$
\Vert w\Vert=\sqrt{\sum_{i=1}^M\vert \log(\lambda_i)\vert^2}.
$$
This expression only depends on the list of eigenvalues of $pq^{-1}$ and their logarithms. Since $pq^{-1}$ is unitary 
$$
\log(\lambda_i)=i(k_i+2\pi l_i),\quad \text{ with } k_i\in (-\pi,\pi], \text{ and } l_i\in\mathbb{Z}.
$$
Then
$$
\{\Vert w\Vert\, :\, \exp(w)= q p^{-1}\}=\left\{\sqrt{\sum_{i=1}^M\vert(k_i+2\pi l_i)\vert^2}\, :\, {l_i}\in\mathbb{Z}\right\},
$$
with a minimum when $l_i=0$ for $i=1,\cdots, M$, which corresponds to a principal logarithm.
\end{proof}

\section{An application of Toponogov's comparison theorem}\label{section:topo}
Riemannian manifolds whose curvature is bounded below may be investigated by applying Toponogov's comparison theorem. We first need to define triangles on the Riemannian manifold.

\begin{definition}
A \emph{geodesic triangle} $T=\Delta(p_1p_2p_3)$ of a Riemannian manifold $M$ is a set consisting of three segments of minimal geodesics, which are called the sides of $T$, say
$$
\gamma_1:[0,1] \to M, \ \gamma_2: [0,1] \to M, \ \ \mbox{and} \ \ \gamma_3: [0,1] \to M,
$$
such that $\gamma_i(1) = \gamma_{i+1}(0)$ for $i=1,2$, and $\gamma_3(1)=\gamma_1(0)$. The endpoints $p_1$, $p_2$ and $p_3$ are called the vertices of the triangle. The angle between the tangent vectors to $\gamma_{i-1}$ and $\gamma_{i+1}^{-1}$ at $p_i$ is called the angle of $T$ at $p_i$, and denoted by $\alpha_i=\angle(p_{i-1}p_ip_{i+1})$ or $\angle p_i$. The perimeter $\ell$ is defined as the sum $\ell(\gamma_1)+\ell(\gamma_2)+\ell(\gamma_3)$; if  we consider, in addition, that the two sides $\gamma_2, \gamma_3$ are minimal geodesics, and the side $\gamma_1$ is a geodesic segment, not necessarily minimal, with $\ell(\gamma_1)\leq \ell(\gamma_2)+\ell(\gamma_3)=d(p_1,p_3)+d(p_1,p_2)$, then the set is said to be a \emph{generalized geodesic triangle} (Figure \ref{geodesictriangle}).
\end{definition}

\begin{figure}[h]
\centering
\begin{tikzpicture}[scale=0.8]
    \draw [line width=1.5pt, fill=gray!2] (0,0) -- (60:4) -- (4,0) -- cycle;

    \coordinate[label=left:$p_2$]  (A) at (0,0);
    \coordinate[label=right:$p_3$] (B) at (4,0);
    \coordinate[label=above:$p_1$] (C) at (2,3.464);

    \coordinate[label=below:$\gamma_1$](c) at ($ (A)!.5!(B) $);
    \coordinate[label=left:$\gamma_3$] (b) at ($ (A)!.5!(C) $);
    \coordinate[label=right:$\gamma_2$](a) at ($ (B)!.5!(C) $);
    \draw[fill=green!30] (0,0) -- (0:0.75cm) arc (0:60:.75cm);
    \draw (0.45cm,0.25cm) node {$\alpha_2$};
    \begin{scope}[shift={(4cm,0cm)}]
        \draw[fill=green!30] (0,0) -- (-180:0.75cm) arc (180:120:0.75cm);
        \draw (150:0.5cm) node {$\alpha_3$};
    \end{scope}
    \begin{scope}[shift={(60:4)}]
        \draw[fill=green!30] (0,0) -- (-120:.75cm) arc (-120:-60:.75cm);
        \draw (-90:0.5cm) node {$\alpha_1$};
    \end{scope}
    \draw [line width=1.5pt] (A) -- (B) -- (C) -- cycle;
  \end{tikzpicture}
  \caption{Geodesic triangle $\Delta(p_1p_2p_3)$.\label{geodesictriangle}}
\end{figure}
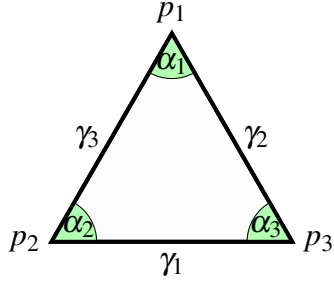

Let us set the part of Toponogov's comparison theorem we are interested in, cf.~\cite[Theorem 4.2 in Chapter IV]{Sak96}:

\begin{theorem}[Toponogov]\label{TCT}
Let $M$ be a complete Riemannian manifold whose sectional curvatures satisfy $\kappa \geq \delta$ everywhere for some constant $\delta$. 
Denote by $M_{\delta}^2$ the 2-dimensional complete simply connected Riemannian manifold of constant curvature $\delta$. Consider a generalized geodesic triangle $\Delta(p_1p_2p_3)$ such that $\gamma_2, \gamma_3$ are minimal and $\ell(\gamma_1) \leq \pi/\sqrt{\delta}$. Then the perimeter $\ell \leq 2\pi/\sqrt{\delta}$ and there exists a geodesic triangle $\Delta(\tilde{p}_1\tilde{p}_2\tilde{p}_3)$ in $M_{\delta}^2$ with the same side lengths $\ell(\tilde{\gamma_i})=\ell(\tilde{\gamma_i})$, for $i=1,2,3$ and satisfying $\alpha_2 \geq \tilde{\alpha}_2$ and $\alpha_3 \geq \tilde{\alpha}_3$.
\end{theorem}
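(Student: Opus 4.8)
The plan is to follow the classical two-step proof of Toponogov's theorem; since we only invoke the statement, a sketch suffices, with the full argument available in \cite{Sak96} or \cite[Ch.~IV]{CE08}. Throughout, the case of interest is $\delta>0$, so that the diameter $\pi/\sqrt{\delta}$ of the model space $M_\delta^2$ (the round sphere of curvature $\delta$) is finite; when $\delta\le 0$ the length hypotheses are vacuous and the argument is strictly easier.

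First I would prove the \emph{infinitesimal} version: for a geodesic triangle contained in a strongly convex ball, the angle comparisons $\alpha_2\ge\tilde\alpha_2$ and $\alpha_3\ge\tilde\alpha_3$ hold, where $\Delta(\tilde p_1\tilde p_2\tilde p_3)\subset M_\delta^2$ is the comparison triangle with the same side lengths --- which exists automatically here because the sides are short and satisfy the triangle inequality. Fixing the vertex $p_1$ and writing $r=d(p_1,\cdot)$, one studies $f(s)=r(\gamma_1(s))$ along the opposite side, parametrized by arc length; the bound $\kappa\ge\delta$ yields, via the Rauch comparison theorem applied to the Jacobi fields of the minimal geodesics issuing from $p_1$ (equivalently, via the Hessian comparison for $r$), a Sturm-type differential inequality for $f$, and comparison with the corresponding model function pins down the two angle inequalities. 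This step is the conceptual heart of the matter but is routine once Rauch's theorem is in hand.

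Next I would \emph{globalize}. Subdivide the side $\gamma_1$ into short arcs $\gamma_1|_{[t_{k-1},t_k]}$, $k=1,\dots,N$, so fine that each triangle $\Delta(p_1,\gamma_1(t_{k-1}),\gamma_1(t_k))$ --- its two sides toward $p_1$ taken to be minimal geodesics --- lies in a strongly convex ball, and apply the infinitesimal version to each piece. Then I would reassemble the pieces inside $M_\delta^2$ using \emph{Alexandrov's lemma}: gluing two model triangles along a shared edge and straightening the broken side into a single geodesic cannot increase the angles at the two far vertices. Iterating this gluing over $k$ produces the comparison triangle $\Delta(\tilde p_1\tilde p_2\tilde p_3)$ with $\ell(\tilde\gamma_i)=\ell(\gamma_i)$, $\alpha_2\ge\tilde\alpha_2$, and $\alpha_3\ge\tilde\alpha_3$. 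The argument never uses minimality of $\gamma_1$, only $\ell(\gamma_1)\le\pi/\sqrt{\delta}$, which is precisely why the \emph{generalized} geodesic triangle is admissible; and the bounds $\ell(\gamma_1)\le\pi/\sqrt{\delta}$ together with $\ell(\gamma_2),\ell(\gamma_3)\le\pi/\sqrt{\delta}$ (the latter from minimality) are exactly what guarantees that every intermediate model triangle, and the final one, genuinely exists on the sphere of curvature $\delta$. The perimeter estimate $\ell\le 2\pi/\sqrt{\delta}$ then falls out of the mere existence of $\Delta(\tilde p_1\tilde p_2\tilde p_3)$, since a closed geodesic triangle on the round sphere of curvature $\delta$ has total length at most $2\pi/\sqrt{\delta}$, by comparison with a great circle.

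I expect the main obstacle to be the bookkeeping in the globalization step: ensuring, through the repeated applications of Alexandrov's lemma, that the triangle inequality and the perimeter bound are preserved so that each comparison triangle in $M_\delta^2$ is well defined at every stage, and proving Alexandrov's lemma itself, which is a purely two-dimensional but fiddly gluing statement. By contrast, the infinitesimal step is standard given the Rauch comparison theorem.
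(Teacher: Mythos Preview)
The paper does not prove this theorem at all: it is quoted as a known result, with an explicit reference to \cite[Theorem~4.2 in Chapter~IV]{Sak96}, and is used as a black box in Section~\ref{section:topo}. So there is no ``paper's own proof'' to compare your proposal against.

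That said, your sketch is the standard two-stage proof (local comparison via Rauch/Hessian comparison, then globalization by subdivision and Alexandrov's lemma), and it is essentially the argument one finds in \cite{Sak96} and \cite{CE08}. Your identification of the role of the hypothesis $\ell(\gamma_1)\le\pi/\sqrt{\delta}$ and of why minimality of $\gamma_1$ is unnecessary is correct, and your remark that the perimeter bound follows from the existence of the comparison triangle on the sphere is also right. For the purposes of this paper, however, a one-line citation would have sufficed.
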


\noindent \emph{Remark.} It is to assume that $\pi/\sqrt{\delta} = +\infty$ when $\delta \leq 0$ in the theorem above.
\medskip

Theorem \ref{TCT} allows us to compare triangles of $U(M)$ with triangles in $\mathbb{R}^2$ by setting $\delta=0$, since in this case $M_{0}^2=\mathbb{R}^2$.
In our situation it is  $p_1=U$,  $p_3=Id$  and $p_2=U_a=\exp(v_T)$  is our approximation matrix in $\mathrm{im}(\varphi)$ (recall that $v_T$ is the tangential component of  the principal logarithm $v$  of $U$).  
Set $
 \gamma_1(t)=U_a\exp(-v_Tt),
$
$
 \gamma_2(t)=\exp(vt)
$ and
$\gamma_3(t)$ a minimal geodesic segment joining $U$ with $U_a$.
Then  $\gamma_2$ and  $\gamma_3$ are minimal geodesic segments and $\gamma_1$ is a geodesic segment (not necsarily minimal). 
Set $\ell_1=\ell(\gamma_1([0,1]))=\norm{v_T}$, $\ell_2=\ell(\gamma_2([0,1]))=d(U,Id)$ and $\ell_3=\ell(\gamma_3([0,1]))=d(U,U_a)$. 
Observe that
$$
\ell_1=\norm{v_T}\leq \norm{v}+d(U,U_a)=\ell_2+\ell_3.
$$
Hence, 
by Theorem \ref{TCT}, there exists a geodesic triangle in $\mathbb{R}^2$ with sides $\vec{\ell_1}, \vec{\ell_2}$ and $\vec{\ell_3}$ of lengths $\ell_1, \ell_2$ resp. $\ell_3$, such that $\alpha_2 \geq \tilde{\alpha}_2$ and $\alpha_3 \geq \tilde{\alpha}_3$. 
We want to estimate the distance $\ell_3$. First of all notice that $\vec{\ell_3}=\vec{\ell_2}- \vec{\ell_1}$, hence the law of cosinus implies that
$$
\norm{\vec{\ell_3}}^2=\norm{\vec{\ell_2}- \vec{\ell_1}}^2 = \norm{\vec{\ell_1}}^2+\norm{\vec{\ell_2}}^2-2\norm{\vec{\ell_1}}\norm{\vec{\ell_2}} \cos{\angle(\vec{\ell_1},\vec{\ell_2})}
$$
and
\begin{equation}\label{eq:4.1}
\ell_3^2=\ell_1^2+\ell_2^2-2\ell_1\ell_2\cos{\tilde{\alpha}_3}.
\end{equation}
Since $\cos{{\alpha}_3}=\frac{\langle v,v_T\rangle}{\norm{v}\norm{v_T}}=\frac{\norm{v_T}}{\norm{v}}\geq 0$, it follows that  $\tilde{\alpha}_3 \leq \alpha_3\leq \frac{\pi}{2}$ which implies $-\cos{\tilde{\alpha}_3} \leq -\cos{\alpha_3}$ and, so,  
\begin{equation}\label{eq:eles}
\begin{aligned}
\ell_3^2\leq &\ell_1^2+\ell_2^2-2\ell_1\ell_2\cos{\alpha_3}=\norm{v_T}^2+\norm{v}^2-2\norm{v_T}\norm{v}\cos{\alpha_3}\\
=&\norm{v}^2+\norm{v_T}^2-2\langle v, v_T \rangle=\norm{v-v_T}^2=\norm{v_N}^2.
\end{aligned}
\end{equation}

Figure \ref{graphrep} shows a graphical representation of our scenario.
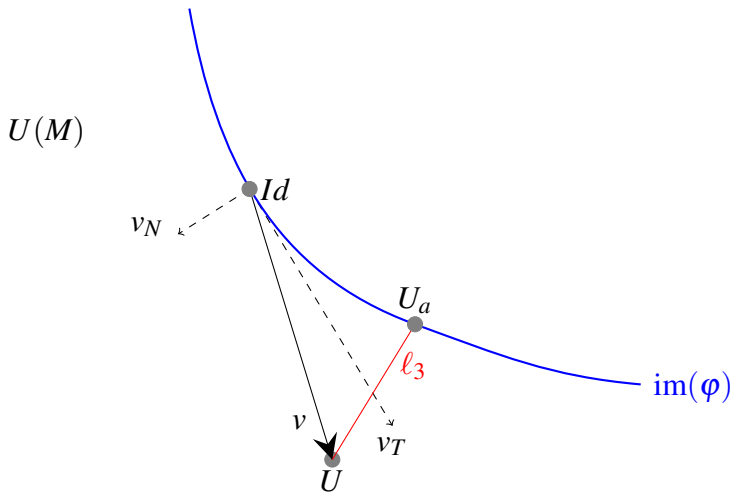
\begin{figure}[h]
\centering
\begin{tikzpicture}
\draw  (0.8,2.6) node [above, right] {$Id$};
\draw  (3,.8) node [above] {$U_a$};

\draw (-1.9,3) node [above] {$U(M)$};

\draw[dashed, ->] (0.8,2.6) -- (2.7,-0.55);
\draw[dashed, ->] (0.8,2.6) -- (-0.15,2);
\draw (2.7,-0.55) node [below] {$v_T$};
\draw (-0.2,2.1) node [below, left] {$v_N$};
\draw[red] (3.3,0.25) node [below, left] {$\ell_3$};
\draw[fill,gray] (1.9,-1) circle [radius=.1];
\draw[decoration={markings,mark=at position 1 with
    {\arrow[scale=3,>=stealth]{>}}},postaction={decorate}]  (0.8,2.6) -- (1.9,-1);
\draw (1.7,-0.5) node [below,left] {$v$};
\draw (1.9,-1) node [below] {$U$};
\draw[red] (3,.8) -- (1.9,-1);
\draw[blue] [thick] (0,5) to [out=-80, in=160] (3,.8) to [out=-20, in=175] (6,0);
\draw[blue] (6.7,-0.4) node [above] {$\mathrm{im}(\varphi)$};
\draw [fill,gray] (3,.8) circle [radius=.1];
\draw [fill,gray] (0.8,2.6) circle [radius=.1];
\end{tikzpicture}
\caption{Submanifold $\mathrm{im}(\varphi)$ in $U(M)$, matrix $U$ and approximation $U_a$.\label{graphrep}}
\end{figure}


Now we want to see the bi-invariant metric in $\mathfrak{u}(M)$ defined in Section \ref{section:bi} as a metric in $U(M)$.
First observe that the Riemannian manifold $(U(M),\langle , \rangle)$ is a Riemannian subvariety of the manifold $\mathcal{M}_n(\mathbb{C})$ of the complex $n\times n$-matrices endowed with the Euclidean inner product. If we write $d_{U(M)}$ for the distance we use on $U(M)$, and $d_{\mathcal{M}_n}$ for the one on 
$\mathcal{M}_n(\mathbb{C})$, then
$$
\ell_3\geq d_{U(M)} (U_a,U) \geq d_{\mathcal{M}_n} (U_a,U).
$$
The inequality holds since ``distance'' between two points is the infimum of the length of any two curves joining the points. Now it is easy to find a minimal geodesic for $\mathcal{M}_n(\mathbb{C})$ joining $U_a$ and $U$, namely the segment $\gamma_4(t)=U_a+(U-U_a)t$ for $t\in [0,1]$. Since $\dot{\gamma_4}(t)=U-U_a$, we find that $ d_{\mathcal{M}_n} (U_a,U)=\norm{\dot{\gamma_4}(t)}\cdot 1=\norm{U-U_a}$ and, therefore, 
$$
\ell_3 \geq \norm{U-U_a}.
$$

This inequality together with (\ref{eq:eles}) yields 
$$
\norm{U-U_a}^2 \leq \ell_3^2 \leq \norm{v_N}^2,
$$
and we obtain, finally,
$$
\norm{U-U_a} \leq \norm{v_N}.
$$

\section{Random unitary matrices and a basis for the image}
\label{tricks}
Before looking into the iterative process that gives the locally optimal approximation to any desired $U\not\in {\rm im}\varphi$, we need to consider a few useful tricks.

While in the previous section we have considered the identity matrix as our starting point in the image group, the results hold for any $U_0 \in {\rm im}\varphi$. The identity has some advantages: it is always in the image, for any values of $n$ and $m$, and, in terms of computation, it is trivial to generate the $M\times M$ identity matrix. 

However, in general, the landscape of the image group is unknown and numerical experiments show there are multiple local minima. In a 2D space, we can picture the image as an irregular profile with mountains and valleys. For any given starting point, the iterative process ends in a local minimum, but, as we do not know how many minima exist, we need a series of random starting points to sample multiple approximations, each the closest to the intended $U$ in its local neighbourhood.

In order to generate a random unitary $U_r\in {\rm im}\varphi $, we choose a random $S_r \in U(m)$ uniformly from all the possible matrices and compute $U_r=\varphi(S_r)$. The random unitaries in $U(m)$ can be generated from samples of a normal distribution \cite{Mez07,Tot08} and the photonic homomorphism is known \cite{SGL04,Sch04}.

Finally, we need a way to project the logarithm of any $U\in U(M)$ to the image algebra ${\rm im}d\varphi$. The method is similar to the generation of random unitaries. We start by working on the algebra corresponding to the scattering matrices, $\mathfrak{u}(m)$, where we can write down a known basis and, using the differential map $d\varphi$, explicitly given in \cite{GGM18}, we can obtain a basis for ${\rm im}d\varphi$, which is a subalgebra of $\mathfrak{u}(M)$. This basis is not necessarily orthonormal, but it can be orthogonalized and normalized. The details of the whole procedure can be found in \cite{GGM19}. This basis, together with the inner product of Eq. (\ref{eqn:skalar}) are enough to obtain the desired projection on the image algebra. 

\section{An iterative process for the approximation}
\label{algo}
In Section \ref{section:topo} we have constructed an approximation matrix $U_1:=U_a \in \mathrm{im}(\varphi)$, starting from $U=\exp{(v)}$ as 
$$
U_1=\exp{((\log{U})_T)}, \ \  \ \ \mbox{with} \ \ \ \ \norm{U-U_1} \leq \norm{(\log{U})_N}.
$$
Now, we can repeat this by taking a new approximation $U_2\in \mathrm{im}(\varphi)$ by considering a geodesic triangle of vertices $Id$, $U_1^{-1}U$ and $U_2$, where $U_1^{-1}U=\exp{(-v_T)}\exp{v}$ and
$$
U_2=U_1\exp{((\log{U_1^{-1}U})_T)} \ \ \mbox{with} \ \  \norm{U-U_2} \leq \norm{(\log{U^{-1}_1U})_N}.
$$
Iteratively,
$$\left\{\begin{aligned}
  U_0=& {\rm Id}\\
  U_n=& U_{n-1}\exp((\log(U_{n-1}^{-1}U)_T))
\end{aligned}\right. $$
with
$$
\norm{U-U_n}\leq \norm{(\log(U_{n-1}^{-1}U)_N}.
$$

\subsection{Convergence}
The method described above converges. Consider the sequence $\{U_i\}$ with $i\geq 0$ and $U_0=Id$. For the first step we know that $d(U_1,U)=\Vert v^1\Vert$, $d(U_2,U)\leq \Vert v^1_N\Vert$, and $d(U_1,U_2)\leq \Vert v^1_T\Vert$. In general, we have:
\begin{enumerate}
\item $d(U_i,U)=\Vert v^i\Vert$, 
\item $d(U_{i+1},U)\leq \Vert v^i_N\Vert$,  
\item $d(U_i,U_{i+1})\leq \Vert v^i_T\Vert$.
\end{enumerate}

\begin{proposition}\label{prop:ineq}
We have that
\begin{equation}\label{eq:ineq}
d(U_{i+1},U) \leq d(U_i, U).
\end{equation}
Furthermore, the equality holds if and only if $U_i=U_{i+1}$.
\end{proposition}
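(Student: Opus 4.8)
The plan is to obtain Proposition \ref{prop:ineq} directly from the estimates $(1)$ and $(2)$ listed just above, combined with the Pythagorean identity furnished by the orthogonal splitting $v^i=v^i_T+v^i_N$ of the principal logarithm $v^i$ of $U_i^{-1}U$ (so that $U_{i+1}=U_i\exp(v^i_T)$). Estimate $(3)$ is not needed here. First I would note that, since $v^i_T\in{\rm im}\,d\varphi$ and $v^i_N\in({\rm im}\,d\varphi)^\perp$ are orthogonal for the inner product (\ref{eqn:skalar}), one has $\norm{v^i}^2=\norm{v^i_T}^2+\norm{v^i_N}^2$, and in particular $\norm{v^i_N}\leq\norm{v^i}$. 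Then chaining this with $(2)$ and $(1)$ gives
$$
d(U_{i+1},U)\ \overset{(2)}{\leq}\ \norm{v^i_N}\ \leq\ \norm{v^i}\ \overset{(1)}{=}\ d(U_i,U),
$$
which is precisely (\ref{eq:ineq}).

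For the rigidity part I would reason as follows. If $U_i=U_{i+1}$ then $d(U_{i+1},U)=d(U_i,U)$ is immediate. Conversely, suppose equality holds in (\ref{eq:ineq}). Then the leftmost and rightmost terms of the displayed chain coincide, so every inequality in it is an equality; in particular $\norm{v^i_N}=\norm{v^i}$, and the Pythagorean identity forces $\norm{v^i_T}^2=\norm{v^i}^2-\norm{v^i_N}^2=0$. By positive definiteness of the Frobenius norm this yields $v^i_T=0$, hence $\exp(v^i_T)={\rm Id}$ and therefore $U_{i+1}=U_i\exp(v^i_T)=U_i$.

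I do not anticipate a real obstacle at this stage: granted $(1)$ and $(2)$, the argument is essentially formal, the only substantive point being the strict drop $\norm{v^i_N}<\norm{v^i}$ whenever $v^i_T\neq0$. Where care is genuinely needed is upstream, in the derivation of $(1)$ and $(2)$ themselves: $(1)$ relies on Lemma \ref{principalminimal} together with the observation that $U_i^{-1}U$ and $UU_i^{-1}$ are conjugate, hence cospectral, so that the length of the minimal geodesic from $U_i$ to $U$ equals $\norm{v^i}$; and $(2)$ is the Toponogov estimate of Section \ref{section:topo} applied to $W:=U_i^{-1}U$ and transported from the base point ${\rm Id}$ to $U_i$ by left-invariance of the metric, so that $d(U,U_{i+1})=d\bigl(W,\exp(v^i_T)\bigr)\leq\norm{v^i_N}$. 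With those two facts in place, nothing further is required.
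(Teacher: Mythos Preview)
Your proof is correct and follows essentially the same route as the paper: chain $(2)$, the inequality $\norm{v^i_N}\leq\norm{v^i}$, and $(1)$, then read off the equality case from $\norm{v^i_T}=0$. The only difference is cosmetic: the paper invokes $(3)$ to pass from $\norm{v^i_T}=0$ to $U_i=U_{i+1}$, whereas you argue directly that $U_{i+1}=U_i\exp(v^i_T)=U_i$, which is slightly cleaner.
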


\begin{proof}
By a successive application of the inequalities (2) and (1) above we obtain
$$
d(U_{i+1},U) \overset{(2)}{\leq} \Vert v^i_N\Vert   \leq  \Vert v^i \Vert \overset{(1)}{=} d(U_i, U).
$$
Now, if the equality (\ref{eq:ineq}) holds, then $\Vert v^i_N\Vert   =  \Vert v^i \Vert $ and so $\Vert v^i_T\Vert  =0 $; inequality (3) above allows us to conclude that $U_i=U_{i+1}$. The converse is trivial.
\end{proof}

Let us define 
$$
d:\mathbb{N} \to \mathbb{R}, \ \ i \mapsto d_i:=d(U_i,U).
$$

\begin{proposition}
The sequence $\{d_i\}$ is convergent.
\end{proposition}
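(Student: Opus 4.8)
The plan is to reduce the claim to the monotone convergence theorem for real sequences. First I would observe that Proposition \ref{prop:ineq} already does the essential work: it establishes that $d_{i+1} = d(U_{i+1},U) \leq d(U_i,U) = d_i$ for every $i \in \mathbb{N}$, so the sequence $\{d_i\}$ is non-increasing.

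Next I would note that each term is a distance in the metric space $(U(M), d)$, so $d_i = d(U_i,U) \geq 0$ for all $i$; in particular the sequence is bounded below (by $0$, though any lower bound suffices). A non-increasing sequence of real numbers that is bounded below converges to its infimum $d_\infty := \inf_i d_i \geq 0$, by the completeness of $\mathbb{R}$. This gives convergence of $\{d_i\}$.

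If one wishes to be slightly more explicit, I would combine the telescoping bound from inequality (3) in the list preceding Proposition \ref{prop:ineq}, namely $d(U_i,U_{i+1}) \leq \Vert v^i_T \Vert$, with the monotonicity just obtained: since $\Vert v^i\Vert^2 = \Vert v^i_T\Vert^2 + \Vert v^i_N\Vert^2$ and $d_{i+1} \leq \Vert v^i_N\Vert \leq \Vert v^i\Vert = d_i$, one gets $\Vert v^i_T\Vert^2 \leq d_i^2 - d_{i+1}^2$, and the right-hand side is the general term of a telescoping series with partial sums bounded by $d_0^2$. This is not needed for mere convergence of $\{d_i\}$ but is the natural stepping stone toward showing the $U_i$ themselves converge.

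There is no real obstacle here: the only subtlety worth flagging is that convergence of $\{d_i\}$ does not by itself guarantee convergence of $\{U_i\}$ in $U(M)$, nor that the limit value $d_\infty$ is zero — that would require the stronger (and false in general, given the multiple local minima mentioned in Section \ref{tricks}) statement that $U \in \mathrm{im}(\varphi)$. So I would keep the proof to exactly what is asserted.
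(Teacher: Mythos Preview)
Your proof is correct and follows essentially the same approach as the paper's: both invoke Proposition~\ref{prop:ineq} for monotonicity, note that distances are nonnegative, and conclude via the monotone convergence theorem. Your version is arguably cleaner since you avoid the paper's unnecessary case split between eventually-constant and strictly-decreasing, and your telescoping remark correctly anticipates the subsequent argument that $\{U_i\}$ itself is Cauchy.
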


\begin{proof}
Assume that $U_i \neq U_{i+1}$ for every $i$ (otherwise, there would exist $n \in \mathbb{N}$ such that $U_n=U_{n+1}$, and therefore $d_n=d_{n+1}=\cdots$ and the sequence converges to $d_n$). By Proposition \ref{prop:ineq} the sequence $\{d_i\}$ is decreasing; in particular $d_i <d_1$ for all $i$. This together with the fact that $d_i \geq 0$ for all $i$ (i.e., the sequence $\{d_i\}$ is bounded) implies the convergence.
\end{proof}

In fact, the approximation given by the described method is the best possible one in the following sense. Since $\{d_i\}$ is a convergent sequence, then it is a Cauchy-sequence. This means that for every $\epsilon >0$ there exists $n_{\epsilon} \in \mathbb{N}$ with
$$
0<d_i-d_{i+1} < \epsilon, \ \ \mbox{for all } \ i > n_{\epsilon}.
$$
Since, by (1) and (2) we have that $d_i=\Vert v^i\Vert$ and $d_{i+1} \leq \Vert v^i_N\Vert$, it is easily deduced that
\begin{equation}\label{eq:4}
0\leq \Vert v^i\Vert - \Vert v^i_T\Vert \leq d_i - d_{i+1} < \epsilon.
\end{equation}

We first observe the following inequality:

\begin{lemma}\label{lemma:lemma}
$$
\Vert v^i -v^i_N\Vert^2 \leq (\Vert v^i\Vert-\Vert v^i_N\Vert)\cdot 2 \cdot d_1.
$$
\end{lemma}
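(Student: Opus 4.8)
The plan is to reduce the whole estimate to the Pythagorean identity coming from the orthogonal decomposition $v^i=v^i_T+v^i_N$, combined with the monotonicity of the sequence $\{d_i\}$ recorded in Proposition \ref{prop:ineq}.

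First I would rewrite the left-hand side. By definition of the decomposition one has $v^i-v^i_N=v^i_T$, and since $v^i_T\in{\rm im}\,d\varphi$ and $v^i_N\in({\rm im}\,d\varphi)^\perp$ are orthogonal for the inner product $(\ref{eqn:skalar})$ — which is exactly the bi-invariant metric used in Section \ref{section:bi} — the Pythagorean theorem gives
$$
\Vert v^i-v^i_N\Vert^2=\Vert v^i_T\Vert^2=\Vert v^i\Vert^2-\Vert v^i_N\Vert^2=(\Vert v^i\Vert-\Vert v^i_N\Vert)(\Vert v^i\Vert+\Vert v^i_N\Vert).
$$
It then remains to bound the factor $\Vert v^i\Vert+\Vert v^i_N\Vert$ by $2d_1$. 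For this I would invoke fact (1), namely $\Vert v^i\Vert=d_i$, together with Proposition \ref{prop:ineq}, which shows that $\{d_i\}$ is non-increasing, so $\Vert v^i\Vert=d_i\le d_1$ for every $i\ge 1$; moreover $\Vert v^i_N\Vert\le\Vert v^i\Vert\le d_1$ since $v^i_N$ is an orthogonal component of $v^i$ (again by Pythagoras). Hence $\Vert v^i\Vert+\Vert v^i_N\Vert\le 2d_1$, and because $\Vert v^i\Vert-\Vert v^i_N\Vert\ge 0$ multiplying the two factors preserves the inequality:
$$
\Vert v^i-v^i_N\Vert^2=(\Vert v^i\Vert-\Vert v^i_N\Vert)(\Vert v^i\Vert+\Vert v^i_N\Vert)\le(\Vert v^i\Vert-\Vert v^i_N\Vert)\cdot 2\cdot d_1,
$$
which is precisely the claim.

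I do not expect any genuine obstacle here — the statement is essentially a one-line computation. The only points that need a little care are (i) applying the Pythagorean identity with respect to the correct metric, i.e.\ the bi-invariant one of Section \ref{section:bi}, which coincides with the metric $(\ref{metric})$ used to split $\mathfrak u(M)$, so that $\langle v^i_T,v^i_N\rangle=0$; and (ii) being precise about the range of the index: the bound $d_i\le d_1$ needed here holds for $i\ge 1$ and relies only on the monotonicity from Proposition \ref{prop:ineq}, not on the conditional strict inequality $d_i<d_1$ used earlier in the Cauchy-sequence discussion.
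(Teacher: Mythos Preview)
Your proof is correct and follows essentially the same route as the paper: both reduce $\Vert v^i-v^i_N\Vert^2$ to $\Vert v^i\Vert^2-\Vert v^i_N\Vert^2$ via the orthogonality of $v^i_T$ and $v^i_N$, factor as a difference of squares, and then bound $\Vert v^i\Vert+\Vert v^i_N\Vert$ by $2d_1$ using $\Vert v^i_N\Vert\le\Vert v^i\Vert=d_i\le d_1$ from fact~(1) and Proposition~\ref{prop:ineq}. The only cosmetic difference is that the paper expands $\Vert v^i-v^i_N\Vert^2$ explicitly and computes $\langle v^i,v^i_N\rangle=\Vert v^i_N\Vert^2$, whereas you observe directly that $v^i-v^i_N=v^i_T$ and invoke Pythagoras.
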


\begin{proof}
\begin{align*}
\Vert v^i -v^i_N\Vert^2 = &\Vert v^i\Vert^2 + \Vert v^i_N\Vert^2-2\langle v^i,v^i_N \rangle \\
=& \Vert v^i\Vert^2 + \Vert v^i_N\Vert^2-2\langle v^i_T+v^i_N,v^i_N \rangle \\
=&\Vert v^i\Vert^2 + \Vert v^i_N\Vert^2-2\Vert v^i_N \Vert^2 = \Vert v^i \Vert^2 - \Vert v^i_N \Vert^2 \\
=& (\Vert v^i \Vert - \Vert v^i_N \Vert )(\Vert v^i \Vert + \Vert v^i_N \Vert ).
\end{align*}
Since $\Vert v^i_N \Vert \leq \Vert v^i \Vert$, it holds that
\begin{align*}
(\Vert v^i \Vert - \Vert v^i_N \Vert )(\Vert v^i \Vert + \Vert v^i_N \Vert ) \leq &(\Vert v^i \Vert - \Vert v^i_N \Vert )2 \Vert v^i \Vert \\
\overset{(1)}{=} & (\Vert v^i \Vert - \Vert v^i_N \Vert ) 2d_i.
\end{align*}
Proposition \ref{prop:ineq} implies that $d_i \leq d_1$, hence we get
$$
\Vert v^i -v^i_N\Vert^2 \leq (\Vert v^i \Vert - \Vert v^i_N \Vert ) 2d_1.
$$
\end{proof}

Now, inequality (\ref{eq:4}) together with Lemma \ref{lemma:lemma} imply that
$$
0 \leq \Vert v^i - v^i_N\Vert ^2 \leq 2\epsilon d_1.
$$
On the other hand, $\Vert v^i-v^i_N \Vert^2 = \Vert v_T \Vert^2$ and, together with inequality (3), this yields
$$
d(U_i,U_{i+1}) \leq \Vert v^i - v^i_N \Vert^2 \leq 2 \epsilon d_1.
$$

Therefore the sequence $\{U_i\}$ is a Cauchy-sequence itself. This allows us to apply Theorem \ref{thm:hr} (Hopf-Rinow) in the complete metric space $(M,g)$, and so the sequence $\{U_i\}$ converges to a matrix $\tilde{U} \in M$, and $\Vert v^T_{\infty} \Vert =0$. Hence the geodesic joining $\tilde{U}$ with $U$ is normal.

\medskip

\subsection{Application example}
We can see an example of the procedure with the Quantum Fourier Transform 
\begin{equation}
Q\!F\!T\ket{x}=\frac{1}{\sqrt{M}}\sum_{y=0}^{M-1}e^{\frac{i2\pi xy}{M}}\ket{y}. 
\end{equation}

We choose the QFT not only for being a useful transformation in quantum information and quantum optics, but also because it can be considered one of the most difficult transformations for linear optics. The QFT is impossible to achieve with linear systems with $m>1$ and more than one photon and has a strong structure.

We start from the QFT matrix for $M=3$ ($n=m=2$)
$$\begin{aligned}
\label{QFT3}
  U=&\frac{1}{\sqrt{3}}\left(
\begin{array}{ccc}
 1 & 1 & 1 \\
 1 & e^{-i\frac{2\pi}{3}} & e^{-i\frac{4\pi}{3}}\\
 1 & e^{-i\frac{4\pi}{3}} & e^{-i\frac{8\pi}{3}} \\
\end{array}
\right)\\
 \approx&\left(\begin{matrix}0.57735 & 0.57735 & 0.57735\\0.57735 & -0.28868 - 0.5 i & -0.28868 + 0.5 i\\0.57735 & -0.28868 + 0.5 i & -0.28868 - 0.5 i\end{matrix}\right).
\end{aligned}
$$

For the implementation of $U$, we consider the states in the ordered basis $\{\ket{20},\ket{02},\ket{11}\}$. All the results are given with 5 significant digits for matrix entries and 10 significant digits for distances, rounding the imaginary or real parts to zero when they are much smaller than the surrounding terms. 

The starting point is the identity matrix, for which $\norm{U-{\rm Id}}=2.449489743$. In the first step of our procedure, we take the projection of the principal logarithm of $U$ into the image algebra:
\begin{equation}
\nonumber
\log(U)=\left(\begin{matrix}- 0.6639 i & 0.9069 i & 0.9069 i\\0.9069 i & - 2.0242 i & - 0.45345 i\\0.9069 i & - 0.45345 i & - 2.0242 i\end{matrix}\right)
\end{equation}
and
\begin{equation}
\nonumber
\log(U)_T=
\left(\begin{matrix}- 0.89062 i & 0 & 0.22672 i\\0 & - 2.251 i & 0.22672 i\\0.22672 i & 0.22672 i & - 1.5708 i\end{matrix}\right)
\end{equation}
so that
\begin{equation}
\nonumber
U_1=\left(\begin{matrix}0.61786 - 0.75486 i & 0.024514 i & 0.20595 + 0.073541 i\\0.024514 i & -0.61786 - 0.75486 i & 0.20595 - 0.073541 i\\0.20595 + 0.073541 i & 0.20595 - 
0.073541 i & - 0.95097 i\end{matrix}\right)
\end{equation}
with $\norm{U-U_1}=1.770101749$.

After 10 steps, we have:
\begin{equation}
\nonumber
U_{10}=\left(\begin{matrix}0.86432 - 0.50294 i & 2.889 \cdot 10^{-6} i & 0.0020837 + 0.0011983 i\\2.889 \cdot 10^{-6} i & -0.86432 - 0.50294 i & 0.0020837 - 0.0011983 i\\0.002
0837 + 0.0011983 i & 0.0020837 - 0.0011983 i & - 0.99999 i\end{matrix}\right)
\end{equation}
with $\norm{U-U_{10}}=1.732054756$.
Further iterations only produce marginal improvements in the distance to the target matrix beyond the fifth decimal place.
 
We stop after 10 additional steps and use as an approximation 
\begin{equation}
\nonumber
U_{20}=\left(\begin{matrix}0.86601 - 0.50003 i & 0 & 0\\0 & -0.86601 - 0.50003 i & 0\\0 i & 0 & - 1.0 i\end{matrix}\right)
\end{equation}
with $\norm{U-U_{20}}=1.732050808$.

\subsection{Random initial matrices}
In fact, the results hold for any arbitrary $U_0\in {\rm im}\varphi$. Now the procedure becomes computationally more involved, as we need an explicit evaluation of $U_r=\varphi(S_r)$ for random $S_r$ matrices, with a complexity which grows combinatorily in $n$ and $m$. However, as we see in this second example, we can explore different local optima. 

For the QFT matrix in Eq. (\ref{QFT3}), if we start at random points, the approximation gravitates towards three solutions. Two of them are at the same distance from the QFT matrix than the approximation we found starting with the identity matrix, $1.7320$, with
\begin{equation}
\nonumber
U_a^1=\left(\begin{matrix}0.86602 - 0.5 i & 0 & 0\\0 & -0.86602 - 0.5 i & 0\\0 & 0 & - 1.0 i\end{matrix}\right)
\end{equation}
and
\begin{equation}
\nonumber
U_a^2=\left(\begin{matrix}0 & 0.86602 + 0.5 i & 0\\0.86602 + 0.5 i & 0 & 0\\0 & 0 & -0.86602 - 0.5 i\end{matrix}\right).
\end{equation}
There is a third solution with a distance 0.85675 to the QFT for the matrix:
\begin{equation}
\nonumber
U_a^3=\left(\begin{matrix}0.43301 + 0.25 i & 0.43301 - 0.25 i & 0.70711\\0.43301 - 0.25 i & - 0.5 i & -0.35355 + 0.61237 i\\0.70711 & -0.35355 + 0.61237 i & 0\end{matrix}\right).
\end{equation}
These solutions seem to be found with equal probability. For a run of 1000 experiments we found $U_a^1$ 311 times, $U_a^2$ 374 times and $U_a^3$ 315 times. Further experiments showed a similar behaviour. 

The best approximation to the $3\times 3$ QFT matrix $U_a^3$ comes from a scattering matrix:
\begin{equation}
\nonumber
S_a^3=\left(\begin{matrix}0.68301 + 0.18301 i & 0.68301 - 0.18301 i\\0.68301 - 0.18301 i & -0.5 + 0.5 i\end{matrix}\right)
\end{equation}
which is, up to a $\frac{5\pi}{12}$ global phase, a balanced two input beam splitter with a scattering matrix
\begin{equation}
\nonumber
S_{BS}=\frac{1}{\sqrt{2}}\left(\begin{matrix} 1 & i \\ i & 1\end{matrix}\right).
\end{equation}
preceded by a $\frac{-\pi}{3}$ phase shifter in the first port and followed by a $\frac{\pi}{3}$ in the second port.

Depending on the starting local point, there are different local optima. Sometimes, two different approximation matrices in the image group give the same distance to the target unitary. 

\section{Summary, recommendations and future improvements}
\label{summary}
We have given an iterative method that finds a linear optical setup that approximates any arbitrary quantum evolution for $n$ photons in $m$ modes. The approximation is optimal in the local neighbourhood of the initial guess. Once we have the closest unitary that can be implemented, $\widetilde{U}$, we can use a previous method \cite{GGM19} to obtain a scattering matrix $\widetilde{S}$ that gives the approximated evolution and there are multiple algorithms that give the physical setup corresponding to $\widetilde{S}$ using only beam splitters and phase shifters \cite{RZB94,CHM16}.

The proposed algorithm is based on results from differential geometry, in particular, Toponogov's theorem. They show the method will converge and find local optima. 

There are a few practical details worth mentioning. First, numerically, we find that if we try the method on an evolution which is possible to obtain from a linear optics system, $U\in {\rm im}\varphi$, sometimes it will converge to a matrix close the actual solution, but, depending on the local landscape, it might fall into a variety of different matrices all at a similar large distance. In practical applications, the recommendation would be, first, check whether an exact implementation exists (with our previous algorithm \cite{GGM19}) and, if there is none, look for an approximation.

There are also some open problems. From the structure of the involved groups and algebras, it is not clear how many local optima exist for any given transformation $\varphi$. We have proposed a randomized way of exploring the state space of the potential approximation matrices, and, for the limited dimensions that can be numerically explored, it seems to work well, but there is no guarantee the method finds a global minimum. Any further knowledge of the structure of the image group would help in the search for a global minimum or, at least, in finding a probabilistic bound on the optimal approximation. Additionally, a lower bound on $\norm{U-U_a}$, as opposed to our upper norm, would help to determine the global optimum.

Even in its present form, the proposed algorithm, when combined with previous results, can assist in the design of quantum optical operations and has applications to quantum information and quantum optics experiment design.

\newcommand{\noopsort}[1]{} \newcommand{\printfirst}[2]{#1}
  \newcommand{\singleletter}[1]{#1} \newcommand{\switchargs}[2]{#2#1}

\end{document}